\theoremstyle{plain}
\newtheorem{proposition}{Proposition}
\theoremstyle{remark}
\title{Non-autonomous reductions of the KdV equation and multi-component analogs of the Painlev\'e equations P$_{34}$ and P$_3$}
\author{V.E.\:Adler\thanks{L.D.\:Landau Institute for Theoretical Physics, Akademika Semenova av. 1A, 142432 Chernogolovka, Russian Federation (permanent address). E-mail: adler@itp.ac.ru} $^\ddag$,
M.P.\:Kolesnikov\thanks{Moscow Institute of Physics and Technology, Institutskiy 9, 141701 Dolgoprudny, Russian Federation (permanent address).} \thanks{Institute of Mathematics, Ufa Federal Research Centre, Russian Academy of Sciences, Chernyshevsky str. 112, 450008 Ufa, Russian Federation.}}
\date{19 April 2023}
\begin{document}
\maketitle

\begin{abstract}
We study reductions of the Korteweg--de Vries equation corresponding to stationary equations for symmetries from the noncommutative subalgebra. An equivalent system of $n$ second-order equations is obtained, which reduces to the Painlev\'e equation P$_{34}$ for $n=1$. On the singular line $t=0$, a subclass of special solutions is described by a system of $n-1$ second-order equations, equivalent to the P$_3$ equation for $n=2$. For these systems, we obtain the isomonodromic Lax pairs and B\"acklund transformations which form the group ${\mathbb Z}^n_2\times{\mathbb Z}^n$.
\end{abstract}

%-------------------------------------------------------------------------------
\section{Introduction}

It is well known that the Lie algebra of generalized symmetries of the KdV equation
\[
 u_t=u_{xxx}-6uu_x
\]
is generated by two infinite sequences of flows (see e.g. \cite{Ibragimov_Shabat_1979, Fuchssteiner_1983, Orlov_Shulman_1985, Burtsev_Zakharov_Mikhailov_1987})
\[
 u_{t_n}=R^n(u_x),\quad 2u_{\tau_n}=R^n(6tu_x-1)=6tu_{t_n}-R^n(1),\quad R=D^2_x-4u-2u_xD^{-1}_x,
\]
where $R$ is called the recursion operator. The first sequence is the usual higher symmetries, commutative and local. The second sequence of symmetries generates a noncommutative Lie subalgebra. These flows are non-local, with the exception of the first two, corresponding to the classical Galilean and scaling symmetries 
\begin{equation}\label{tau01}
 2u_{\tau_0}=6tu_x-1,\quad u_{\tau_1}=3tu_t+xu_x+2u,
\end{equation}
for example, the next flow (KdV master symmetry) is of the form
\begin{equation}\label{tau2}
 u_{\tau_2}=3t(u_{xxxx}-10uu_{xx}-5u^2_x+10u^3)_x+xu_t+4u_{xx}-8u^2-2u_xD^{-1}(u).
\end{equation}
Stationary equations for symmetries of both types determine constraints compatible with KdV, of the general form
\begin{equation}\label{ab}
 a[R](u_x)+b[R](6tu_x-1)=0
\end{equation}
where $a$ and $b$ are operator polynomials with constant coefficients. If $b[R]=0$, then this is the Novikov equation that defines finite-gap solutions, has a complete set of first integrals, and is Liouville integrable. It was noticed by Choodnovsky and Choodnovsky \cite{Choodnovsky_1978} that the Novikov equation can be transformed to the Garnier system \cite{Garnier_1919} for the eigenfunctions of the Schr\"odinger operator:
\begin{equation}\label{Garnier}
 u=\psi_1\varphi_1+\dots+\psi_n\varphi_n,\quad 
 \psi_{j,xx}=(u-\lambda_j)\psi_j,\quad \varphi_{j,xx}=(u-\lambda_j)\varphi_j.
\end{equation}
This can also be interpreted as the stationary equation for a sum of the simplest symmetry $u_{t_0}=u_x$ and the negative KdV flows which are defined as squared eigenfunction symmetries $u_{t_j}= (\psi_j\varphi_j)_x$. This topic was developed and applied to other equations by Antonowicz and Rauch-Wojciechowski \cite{Antonowicz_Rauch-Wojciechowski_1990, Antonowicz_Rauch-Wojciechowski_1991, Antonowicz_1992} and others, see also the book \cite{Suris_2003} containing comprehensive information about the Garnier system and a detailed bibliography.

For $b[R]\ne0$, equation (\ref{ab}) is not Liouville integrable, but retains the Painlev\'e property. These are the so-called string equations \cite{Moore_1990, Novikov_1990}. Reductions of this type are of interest as a potential source of exact solutions with nonstandard asymptotics. In particular, they have applications to the Gurevich--Pitaevskii problems on the decay of the wave front in the vicinity of the breaking point and on the evolution of step-like potentials \cite{GP_1973a, GP_1973b}; in this connection, we mention the results by Suleimanov and others \cite{Suleimanov_1994, Kudashev_Suleimanov_1996, Dubrovin_2006, Adler_2020}. 

If the polynomial $b[R]$ is linear, that is, only classical symmetries (\ref{tau01}) are added to the Novikov equation, then equation (\ref{ab}) turns into the hierarchies of higher P$_1$ and P$_2$ Painlev\'e equations \cite{Kudryashov_1997, Kudryashov_2002, Clarkson_Joshi_Pickering_1999, Mazzocco_Mo_2007}. The stationary equation for a linear combination of flows (\ref{tau01}) and (\ref{tau2}) was studied in \cite{Adler_2020}.

In this paper, we consider the case when $a[R]=0$ and $b[R]$ is a polynomial of arbitrary degree $n$ without multiple roots. The main result is that, despite the nonlocal nature of the symmetries used, the stationary equation for them can be written in a compact and uniform form.

In Section \ref{s:stflow}, we show that the equation $b[R]=0$ can be transformed into a stationary equation for the Galilean symmetry and a sum of the negative flows of the KdV hierarchy, which leads to the system (\ref{yxx}), ( \ref{uy}) of $n$ nonautonomous second-order equations. The evolution in $t$ in virtue of the KdV equation is described by additional system (\ref{yt}). For $n=1$, the Painlev\'e equation P$_{34}$ appears, which defines the usual self-similar KdV reduction. Note that, for $n=2$, close Painlev\'e type systems arising from the self-similar reduction in the 5th order Sawada--Kotera and Kaup--Kuperschmidt equations were studied by Hone \cite{Hone_2001}.

Interpretation of negative flows as symmetries with squared eigenfunctions leads to a non-autonomous version of the Garnier system, which differs from (\ref{Garnier}) only in the form of relation between the potential and psi-functions:
\[
 3tu-\frac{x}{2}=\psi_1\varphi_1+\dots+\psi_n\varphi_n.
\]
If we choose in (\ref{ab}) a polynomial $a[R]$ of degree greater than that of $b[R]$, then more general constraints arise, but their study is beyond the scope of our work. A general construction of non-autonomous generalizations of the Garnier system, in which symmetries with squared eigenfunctions are immediately added into the constraint (\ref{ab}), was proposed by Orlov and Rauch-Wojciechowski \cite{Orlov_Rauch-Wojciechowski_1993}. 

In Section \ref{s:DBT}, B\"acklund transformations for systems (\ref{yxx})--(\ref{uy}) are derived, based on the Darboux transformations for the auxiliary linear problems, and it is shown that they form the group ${\mathbb Z}^n_2\times{\mathbb Z}^n$. 

For the system (\ref{yt}), the parameter value $t=0$ is a singular point and the corresponding KdV solution has, in general, a singularity along the line $t=0$. In section \ref{s:t0}, a subclass of special regular solutions is distinguished. For them, the order of the system (\ref{yxx}) is reduced for $t=0$, so that the initial conditions on this line are described by a system of $n-1$ second-order equations, which is equivalent to the P$_3$ equation for $n=2$. The group of B\"acklund transformations for this system remains the same.

%-------------------------------------------------------------------------------
\section{Main equations}\label{s:stflow} 

We study the following two systems of non-autonomous ODEs with respect to the variables $y_1,y_{1,x}$, \dots, $y_n,y_{n,x}$:
\begin{gather}
\label{yxx}
 y_{j,xx}=\frac{y^2_{j,x}-\alpha^2_j}{2y_j}+2(u-\lambda_j)y_j,\\
\label{yt}
 y_{j,t}=2u_xy_j-2(u+2\lambda_j)y_{j,x},\\
\label{uy}
 u:= \frac{x}{6t}+\frac{1}{3t}(y_1+\dots+y_n)
\end{gather}
where $\alpha_1,\dots,\alpha_n$ and $\lambda_1,\dots,\lambda_n$ are arbitrary parameters such that $\lambda_i\ne\lambda_j$ for $i\ne j$, and it is understood that extension of equations (\ref{yt}) to derivatives $y_{j,x}$ is determined by differentiation with respect to $x$ by virtue of (\ref{yxx}).

For $n=1$, the system (\ref{yxx}), (\ref{uy}) reduces to one equation
\[
 y_{1,xx}=\frac{y^2_{1,x}-\alpha^2_1}{2y_1}+2\Bigl(\frac{y_1}{3t}+\frac{x}{6t}-\lambda_1\Bigr)y_1.
\]
Up to obvious substitutions, it coincides with the Painlev\'e equation P$_{34}$ 
\[
 w_{zz}=\frac{w^2_z-\alpha^2}{2w}+2w^2-zw
\]
describing the self-similar reduction of the KdV equation. We will show that the equations for arbitrary $n$ also define a certain KdV reduction and derive an isomonodromy representation and B\"acklund transformations for them.

Note that equations (\ref{yxx})--(\ref{uy}) are related by the substitution $y_j=\psi_j\varphi_j$ with the Garnier type systems
\begin{equation}\label{G}
\begin{gathered}
 \psi_{j,xx}=(u-\lambda_j)\psi_j,\quad
 \varphi_{j,xx}=(u-\lambda_j)\varphi_j,\\
 \psi_{j,t}=u_x\psi_j-2(u+2\lambda_j)\psi_{j,x},\quad
 \varphi_{j,t}=u_x\varphi_j-2(u+2\lambda_j)\varphi_{j,x},\\
 u:= \frac{x}{6t}+\frac{1}{3t}(\psi_1\varphi_1+\dots+\psi_n\varphi_n);
\end{gathered}
\end{equation}
the parameters $\alpha_j=\psi_{j,x}\varphi_j-\psi_j\varphi_{j,x}$ are the first integrals for (\ref{G}).

The main property of the systems (\ref{yxx})--(\ref{uy}) is formulated as follows.

\begin{proposition}\label{prop:xt}
Equations (\ref{yxx})--(\ref{uy}) are consistent, that is, the equalities $(y_{j,xx})_t=(y_{j,t})_{xx}$ hold identically. By virtue of (\ref{yxx})--(\ref{uy}), the variable $u$ satisfies the KdV equation
\begin{equation}\label{ut}
 u_t=u_{xxx}-6uu_x.
\end{equation}
\end{proposition}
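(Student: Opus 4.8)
The plan is to verify both assertions by direct computation, working with the Garnier-type form \eqref{G} since the quadratic substitution $y_j=\psi_j\varphi_j$ makes the algebra more transparent (and the first integrals $\alpha_j$ are then automatic). First I would treat the $x$- and $t$-flows of \eqref{G} as a formal overdetermined system: the $x$-part is the Schr\"odinger equation $\psi_{j,xx}=(u-\lambda_j)\psi_j$ (and likewise for $\varphi_j$) with $u$ given by the constraint $3tu-\tfrac{x}{2}=\sum_k\psi_k\varphi_k$, and the $t$-part is the standard linear evolution $\psi_{j,t}=u_x\psi_j-2(u+2\lambda_j)\psi_{j,x}$. The key point is that, for \emph{any} fixed $u$ solving KdV, these $t$-evolutions are exactly the ones that preserve $\psi_{j,xx}=(u-\lambda_j)\psi_j$: this is the classical squared-eigenfunction / Darboux fact that $(\psi_j\varphi_j)_x$ is a symmetry of KdV. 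So the real content is that the constraint \eqref{uy} (equivalently $3tu-\tfrac{x}{2}=\sum_k y_k$) is consistent with both flows, and that it forces $u$ to satisfy KdV rather than just being compatible with it.

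The concrete steps: (1) Differentiate the constraint $3tu-\tfrac{x}{2}=\sum_k y_k$ in $x$ twice, using \eqref{yxx} in the form $y_{k,xx}=2(u-\lambda_k)y_k+(y_{k,x}^2-\alpha_k^2)/(2y_k)$; this expresses $3tu_{xx}$ in terms of the $y_k$. It is cleaner to use instead the well-known second-order consequence of \eqref{G}, namely that $y_k=\psi_k\varphi_k$ satisfies the third-order linear ODE $y_{k,xxx}=2(2(u-\lambda_k)y_k)_x+2u_xy_k$, i.e. $y_{k,xxx}-4(u-\lambda_k)y_{k,x}-2u_xy_k=0$. Summing over $k$ and using $\sum_k y_k=3tu-\tfrac x2$ together with $\sum_k\lambda_k y_{k,x}=\big(\sum_k\lambda_k y_k\big)_x$ (here one needs the auxiliary combination $\sum_k\lambda_k y_k$, which I would track as a second dependent quantity, or eliminate using the first integrals), one collects the $x$-derivatives of the constraint and obtains precisely $3t(u_{xxx}-6uu_x-\text{something})$; choosing the relative coefficients in \eqref{yxx}, \eqref{yt}, \eqref{uy} as given makes the bracket collapse to $u_t$. (2) Compute $3t\,u_t$ directly from the constraint: differentiating $3tu-\tfrac x2=\sum_k y_k$ in $t$ gives $3u+3tu_t=\sum_k y_{k,t}=\sum_k\big(2u_xy_k-2(u+2\lambda_k)y_{k,x}\big)=2u_x\sum_k y_k-2u\sum_k y_{k,x}-4\sum_k\lambda_k y_{k,x}$. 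Substitute $\sum_k y_k=3tu-\tfrac x2$ and $\sum_k y_{k,x}=3tu_x-\tfrac12$, and handle the $\sum_k\lambda_k y_{k,x}$ term via the third-order relations from step (1). Matching the two expressions for $3tu_t$ yields the KdV equation \eqref{ut}. (3) For the consistency $(y_{j,xx})_t=(y_{j,t})_{xx}$: once $u_t=u_{xxx}-6uu_x$ is known, this is the statement that the $x$- and $t$-parts of the linear system \eqref{G} (hence of \eqref{yxx}--\eqref{yt}) form a compatible Lax-type pair for the Schr\"odinger operator; I would verify it by cross-differentiating, substituting \eqref{yxx} and \eqref{yt}, reducing all $x$-derivatives of $y_j$ of order $\ge2$ via \eqref{yxx} and all $t$-derivatives of $u$ via \eqref{ut}, and checking the remainder vanishes. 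A cleaner route is to do this check at the level of $\psi_j$: show $(\psi_{j,xx})_t-(\psi_{j,t})_{xx}=\big((u-\lambda_j)\psi_j\big)_t-\big(u_x\psi_j-2(u+2\lambda_j)\psi_{j,x}\big)_{xx}$ reduces, using $\psi_{j,xx}=(u-\lambda_j)\psi_j$ and $u_t=u_{xxx}-6uu_x$, to $(u_t-u_{xxx}+6uu_x)\psi_j=0$, which then transfers to $y_j$.

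The main obstacle is bookkeeping rather than conceptual: the quantity $\sum_k\lambda_k y_{k,x}$ (and $\sum_k\lambda_k y_k$) appears in both step (1) and step (2) and is not directly one of the constrained combinations, so one must either carry it as an auxiliary variable throughout, or eliminate it by combining the two computations, or exploit the first integrals $\alpha_j=\psi_{j,x}\varphi_j-\psi_j\varphi_{j,x}$ to re-express it. Getting the two occurrences of this term to cancel when \eqref{yxx}, \eqref{yt}, \eqref{uy} are imposed is exactly where the specific numerical coefficients ($\tfrac{1}{3t}$, $\tfrac{x}{6t}$, the factors $2$ and $2(u+2\lambda_j)$) are pinned down, and it is the one place where a sign or factor error would not show up until the very end. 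I would organize the computation so that this cancellation is isolated in a single displayed identity.
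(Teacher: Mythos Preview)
Your approach is correct and genuinely different from the paper's. The paper does not verify the identity by direct computation at all; instead it introduces the zero curvature triple $U_t=V_x+[V,U]$, $U_\tau+\kappa U_\lambda=W_x+[W,U]$, $V_\tau+\kappa V_\lambda=W_t+[W,V]$ with $W=M(g)$ for a polynomial $g(\lambda)$, and then (Proposition~\ref{prop:yg}) shows that the resulting equations \eqref{gxxx}, \eqref{gt} for $g$ are \emph{equivalent} to the system \eqref{yxx}--\eqref{uy} via the interpolation identity $g/\kappa=-3t-\tfrac12\sum_j y_j/(\lambda-\lambda_j)$. Consistency and the KdV equation then fall out of the zero curvature structure without any explicit cross-differentiation. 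What the paper's route buys is the isomonodromic Lax pair of Proposition~\ref{prop:xtLax} for free, plus the conceptual identification of the constraint \eqref{uy} as the stationary equation for a specific non-autonomous KdV symmetry; what your route buys is self-containment and elementarity.

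On the execution: your worry about the term $\sum_k\lambda_k y_{k,x}$ is overstated. Summing the third-order consequence $y_{k,xxx}=4(u-\lambda_k)y_{k,x}+2u_xy_k$ over $k$ and using $\sum_k y_k=3tu-\tfrac{x}{2}$ gives
\[
 3t(u_{xxx}-6uu_x)=-2u-xu_x-4\sum_k\lambda_ky_{k,x},
\]
while differentiating the constraint in $t$ and substituting \eqref{yt} gives
\[
 3tu_t=-2u-xu_x-4\sum_k\lambda_ky_{k,x}.
\]
The right-hand sides are literally identical, so the awkward sum cancels on comparison and no auxiliary tracking or use of the $\alpha_j$ is needed. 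With $u_t=u_{xxx}-6uu_x$ in hand, your step (3) is routine; the cleanest version is indeed the $\psi$-level check you describe, which reduces to $(u_t-u_{xxx}+6uu_x)\psi_j=0$.
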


We present a proof based on the use of auxiliary linear problems for non-autonomous symmetries:
\[
 \Psi_x=U\Psi,\quad \Psi_t=V\Psi,\quad \Psi_\tau+\kappa\Psi_\lambda=W\Psi
\]
where $U,V,W$ are matrices depending on $x,t,\tau$ and polynomial in the spectral parameter $\lambda$, and $\kappa$ is a
polynomial in $\lambda$ with constant coefficients. The compatibility conditions are written in the form of the zero curvature
representations 
\begin{equation}\label{UVW}
 U_t=V_x+[V,U],\quad U_\tau+\kappa U_\lambda=W_x+[W,U],\quad V_\tau+\kappa V_\lambda=W_t+[W,V].
\end{equation}
For the KdV hierarchy, the matrices $U,V$ and $W$ have a common structure
\begin{equation}\label{M}
 M(g)=\begin{pmatrix}
  -g_x & 2g \\
  2(u-\lambda)g-g_{xx} & g_x
 \end{pmatrix}.
\end{equation}
Let
\begin{equation}\label{UV}
 U=M(1/2)=\begin{pmatrix}
  0 & 1 \\
  u-\lambda & 0
 \end{pmatrix},\quad 
 V=M(-u-2\lambda),
\end{equation}
then the first equation (\ref{UVW}) is equivalent to the KdV equation (\ref{ut}) and the varables $\psi_j$ and $\varphi_j$ from (\ref{G}) are interpreted as solutions to the corresponding linear problems for $\lambda=\lambda_j$. Given the above $U$, $V$ and a general matrix $W=M(g)$, the last two equations (\ref{UVW}) reduce to the relations
\begin{gather}
\label{utau}
 u_\tau=-g_{xxx}+4(u-\lambda)g_x+2u_xg+\kappa,\\
\label{gt}
 g_t=g_{xxx}-6ug_x-3\kappa.
\end{gather}
The equations (\ref{ut}), (\ref{utau}), and (\ref{gt}) satisfy the compatibility conditions $(u_t)_\tau=(u_\tau)_t$, which is easy to check directly (in other words, the equations (\ref{utau}), (\ref{gt}) define a generalized symmetry for the KdV equation). Hence it follows that the stationary equation
\begin{equation}\label{gxxx}
 g_{xxx}+4(\lambda-u)g_x-2u_xg=\kappa
\end{equation}
is a constraint consistent with (\ref{gt}) and (\ref{ut}). Taking this into account, the Proposition \ref{prop:xt} is a simple corollary of the following statement.

\begin{proposition}\label{prop:yg}
Assume that $\deg g=\deg\kappa=n$ and the zeroes of the polynomial $\kappa$ are simple, then equations (\ref{yxx})--(\ref{yt}) are equivalent to (\ref{gxxx}) and (\ref{gt}).
\end{proposition}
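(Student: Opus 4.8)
The plan is to set up an explicit dictionary between the polynomial $g$ and the tuple $(y_1,\dots,y_n)$, and then to read off both (\ref{gxxx}) and (\ref{gt}) coefficient by coefficient in $\lambda$. Writing $\kappa=\kappa_n\prod_{j=1}^n(\lambda-\lambda_j)$ with simple zeros and $\deg g=\deg\kappa=n$, I would represent $g$ in Lagrange form through the nodes $\lambda_j$,
\[
 g(\lambda)=\frac{g_n}{\kappa_n}\,\kappa(\lambda)+\sum_{j=1}^n g(\lambda_j)\prod_{i\ne j}\frac{\lambda-\lambda_i}{\lambda_j-\lambda_i},
\]
so that $g$ is completely encoded by its leading coefficient $g_n$ and the values $h_j:=g(\lambda_j)$. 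The $y$-variables will be $y_j$ proportional to $h_j$, with a factor $\propto 1/\kappa'(\lambda_j)$ fixed below so that (\ref{uy}) comes out exactly; for $t\ne0$ (where $g_n\ne0$, as we shall see) this is a bijection between degree-$n$ polynomials $g$ and tuples $(y_1,\dots,y_n)$ once $g_n$ is known.

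Next I would expand the polynomial identity (\ref{gxxx}) in powers of $\lambda$. The coefficient of $\lambda^{n+1}$ is $4g_{n,x}=0$, so $g_n$ depends on $t$ only; combined with this, the $\lambda^n$ coefficient of (\ref{gt}) reads $g_{n,t}=-3\kappa_n$, which pins $g_n$ down as a linear function of $t$, and normalising the origin of $t$ we get $g_n=-3\kappa_n t$ — in particular $\deg g=n$ exactly for $t\ne0$, while $t=0$ is the anticipated degenerate line. The coefficient of $\lambda^n$ in (\ref{gxxx}), namely $4g_{n-1,x}-2u_xg_n=\kappa_n$, becomes after inserting the Lagrange value of $g_{n-1,x}$ and one integration in $x$ precisely the defining relation (\ref{uy}) (this is what fixes the proportionality constant between $y_j$ and $h_j$ and the integration constant). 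Since $g_{n,x}=0$, the difference of the two sides of (\ref{gxxx}) is a polynomial in $\lambda$ of degree at most $n$; having matched the $\lambda^n$ coefficient, it vanishes identically iff it vanishes at $\lambda=\lambda_1,\dots,\lambda_n$, and at $\lambda=\lambda_j$ the equation reads
\[
 h_{j,xxx}=4(u-\lambda_j)h_{j,x}+2u_xh_j .
\]
This third-order linear ODE admits the first integral $h_jh_{j,xx}-\tfrac12 h_{j,x}^2-2(u-\lambda_j)h_j^2=\mathrm{const}$, obtained from
\[
 h_jh_{j,xxx}-4(u-\lambda_j)h_jh_{j,x}-2u_xh_j^2=\frac{d}{dx}\bigl(h_jh_{j,xx}-\tfrac12 h_{j,x}^2-2(u-\lambda_j)h_j^2\bigr);
\]
rewriting in terms of $y_j$ and calling (a suitable multiple of) the constant $-\tfrac12\alpha_j^2$ turns it into (\ref{yxx}). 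Thus (\ref{gxxx}) is equivalent to (\ref{uy}) together with (\ref{yxx}), the parameters $\alpha_j$ being the hidden first integrals of (\ref{gxxx}).

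For (\ref{gt}) the analysis is parallel: its $\lambda^n$ coefficient has already been used, and — because $g_{n,x}=0$ kills the $\lambda^n$ terms of both $g_{xxx}$ and $ug_x$ — the difference of its two sides is now a polynomial of degree at most $n-1$, hence vanishes iff it vanishes at the $n$ nodes $\lambda_j$. At $\lambda=\lambda_j$ equation (\ref{gt}) reads $h_{j,t}=h_{j,xxx}-6uh_{j,x}$, and substituting $h_{j,xxx}$ from the third-order ODE above gives $h_{j,t}=2u_xh_j-2(u+2\lambda_j)h_{j,x}$, i.e. (\ref{yt}) after scaling $h_j$ to $y_j$. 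Conversely, starting from a solution of (\ref{yxx})--(\ref{yt}), one defines $g$ by the Lagrange formula with $g_n=-3\kappa_n t$, notes that (\ref{uy}) guarantees the $\lambda^n$ and $\lambda^{n+1}$ coefficients of (\ref{gxxx}), that differentiating (\ref{yxx}) supplies the third-order ODEs and hence the remaining coefficients, and that (\ref{yt}) reassembles into (\ref{gt}).

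The main obstacle is purely bookkeeping. One must carry out the degree count cleanly: the two polynomial identities have different ambient degrees ($\le n+1$ for (\ref{gxxx}), $\le n$ for (\ref{gt})), and in each case one has to verify that matching the top one or two coefficients plus the $n$ evaluations at $\lambda=\lambda_j$ really exhausts the content. One must also fix the normalization of $y_j$ versus $h_j=g(\lambda_j)$ and the integration constant in (\ref{uy}) so that the $\lambda^n$ coefficient of (\ref{gxxx}) is (\ref{uy}) on the nose and not merely proportional to it. The only genuinely non-algebraic point is the passage from the third-order ODE for $h_j$ to the second-order equation (\ref{yxx}): the correspondence there is an equivalence of solution families foliated by the values $\alpha_j$ of the first integral rather than a literal identity of equations, and that is the place to be careful.
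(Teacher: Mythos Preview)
Your approach is essentially the paper's: both parametrise $g$ by its values $h_j=g(\lambda_j)$ via Lagrange interpolation, fix the leading coefficient from the top coefficients of (\ref{gxxx}) and (\ref{gt}), evaluate (\ref{gxxx}) at the nodes and reduce the resulting third-order linear ODE to (\ref{yxx}) by the integrating factor $h_j$, and read off (\ref{yt}) from (\ref{gt}) at $\lambda=\lambda_j$ (the paper first rewrites (\ref{gt}) as $g_t=2u_xg-2(u+2\lambda)g_x-2\kappa$, but your substitution of $h_{j,xxx}$ amounts to the same thing). Two small points you leave implicit that the paper makes explicit: the integration constant produced when you integrate the $\lambda^n$ coefficient of (\ref{gxxx}) is only a true constant after invoking (\ref{gt}) at the next order, and is then absorbed by a translation $x\to x-x_0$; and one should check that the first integrals $\alpha_j$ are also $t$-independent, which follows from the compatibility of the reduced equations.
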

\begin{proof} Let
\[
 g(\lambda)=g_0\lambda^n+\dots+g_n,\quad 
 \kappa(\lambda)=\kappa_0\lambda^n+\dots+\kappa_n
  =\kappa_0(\lambda-\lambda_1)\cdots(\lambda-\lambda_n)
\]
where $\kappa_0\ne0$ and all $\lambda_j$ are pairwise different. By using the constraint (\ref{gxxx}), we bring (\ref{gt}) to the equivalent form
\begin{equation}\label{gt'}
 g_t=2u_xg-2(u+2\lambda)g_x-2\kappa.
\end{equation}
Let us set $Y_j=g(\lambda_j)$. For $\lambda=\lambda_j$, equation (\ref{gxxx}) admits the integrating factor $2Y_j$ and equations (\ref{gxxx}), (\ref{gt'}) reduce to
\[
 2Y_jY_{j,xx}-Y^2_{j,x}+4(\lambda_j-u)Y^2_j=\gamma_j,\quad Y_{j,t}=2u_xY_j-2(u+2\lambda_j)Y_{j,x}.
\]
Checking the compatibility condition shows that the integration constants $\gamma_j$ do not depend on $t$. To close these equations, it is necessary to relate $u$ and $Y_j$, for which it suffices to express the first two coefficients $g_0$ and $g_1$ in terms of $u$. The equality of the coefficients at the powers of $\lambda$ in (\ref{gxxx}) gives the recurrent relations
\begin{equation}\label{gjx}
 g_{0,x}=0,~~ -4g_{j+1,x}=g_{j,xxx}-4ug_{j,x}-2u_xg_j-\kappa_j,~~ j=0,\dots,n-1,
\end{equation}
while equation (\ref{gt}) is equivalent to relations
\begin{equation}\label{gjt}
 g_{j,t}=g_{j,xxx}-6ug_{j,x}-3\kappa_j,\quad j=0,\dots,n,
\end{equation}
which make it possible to refine the dependence on $t$ of the integration constants arising in the calculation of $g_j$. For the first two coefficients it is easy to show that
\[
 g_0=-3\kappa_0t+c_0,\quad
 g_1=-\frac{\kappa_0}{4}(6tu-x)+\frac{c_0}{2}u-3\kappa_1t+c_1,\quad c_0,c_1=\operatorname{const}.
\]
Applying transformations of the form $t\to t-t_0$, $x\to x-x_0$, we set $c_0=c_1=0$ without loss of generality, then
\begin{equation}\label{gnn}
 g_0=-3\kappa_0t,\quad g_1=-\frac{\kappa_0}{4}(6tu-x)-3\kappa_1t.
\end{equation}
Hence it follows that $g+3\kappa t$ is polynomial in $\lambda$ of degree $n-1$. Its interpolation polynomial with nodes in $\lambda_1,\dots,\lambda_n$ is of the form
\[
 g+3\kappa t= -\frac{\kappa_0}{4}(6tu-x)\lambda^{n-1}+\dots
  = \sum^n_{j=1}Y_j\prod_{i\ne j}\frac{\lambda-\lambda_i}{\lambda_j-\lambda_i},
\]
which yields the expression for $u$ in terms of $Y_1,\dots,Y_n$:
\[
 \kappa_0(6tu-x)=-4\sum^n_{j=1}\frac{Y_j}{\prod_{i\ne j}(\lambda_j-\lambda_i)}.
\]
Now, in order to obtain equations (\ref{yxx})--(\ref{yt}), we only have to replace
\[
 y_j= -\frac{2Y_j}{\kappa_0\prod_{i\ne j}(\lambda_j-\lambda_i)}
\]  
and to change the constants $\gamma_j$. 
\end{proof}

It is easy to see that the variables $y_j$ are nothing but the coefficients of the pole expansion of $g/\kappa$:
\[
 \frac{g}{\kappa}=-3t-\frac{1}{2}\Bigl(\frac{y_1}{\lambda-\lambda_1}+\dots+\frac{y_n}{\lambda-\lambda_n}\Bigr).
\]
Passing to the matrix $\tilde W=W/\kappa=M(g/\kappa)$, we obtain isomonodromic Lax pairs for our systems.
\begin{proposition}\label{prop:xtLax}
Equations (\ref{yxx})--(\ref{uy}) admit the isomonodromy representations
\begin{equation}\label{WxWt}
 W_x=U_\lambda+[U,W],\quad W_t=V_\lambda+[V,W],
\end{equation}
with the matrices $U$, $V$ given in (\ref{UV}) and
$\displaystyle W=M\left(-3t-\frac{1}{2}\Bigl(\frac{y_1}{\lambda-\lambda_1}+\dots+\frac{y_n}{\lambda-\lambda_n}\Bigr)\right)$.
\end{proposition}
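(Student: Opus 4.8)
The plan is to obtain the representation $(\ref{WxWt})$ as the stationary reduction $u_\tau=0$ of the zero curvature equations $(\ref{UVW})$, so that essentially nothing new has to be computed. Recall the set-up preceding Proposition~\ref{prop:yg}: for $U=M(1/2)$, $V=M(-u-2\lambda)$ and $W=M(g)$ of the form $(\ref{M})$, the three equations $(\ref{UVW})$ are equivalent to the KdV equation $(\ref{ut})$ together with the relation $(\ref{utau})$ for $u_\tau$ and the evolution $(\ref{gt})$ for $g$; the first of these expresses that $\Psi_x=U\Psi$, $\Psi_t=V\Psi$ is the standard KdV Lax pair.

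I would then impose $u_\tau=0$. Under this constraint $(\ref{utau})$ becomes exactly the stationary equation $(\ref{gxxx})$, while $(\ref{gt})$ is unchanged, and by Proposition~\ref{prop:yg} the pair $(\ref{gxxx})$, $(\ref{gt})$ is equivalent to the systems $(\ref{yxx})$, $(\ref{yt})$ with $u$ given by $(\ref{uy})$ through the normalization $(\ref{gnn})$; by Proposition~\ref{prop:xt} the resulting $u$ then solves $(\ref{ut})$, so the first equation of $(\ref{UVW})$ holds automatically. Since $u_\tau=0$, the matrices $U=M(1/2)$ and $V=M(-u-2\lambda)$ are $\tau$-independent, so $U_\tau=V_\tau=0$ and the last two equations of $(\ref{UVW})$ collapse to $\kappa U_\lambda=W_x+[W,U]$ and $\kappa V_\lambda=W_t+[W,V]$. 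Dividing both by the scalar polynomial $\kappa=\kappa(\lambda)$ — which, having constant coefficients, commutes with $\partial_x$, $\partial_t$ and with matrix multiplication — and setting $\tilde W:=W/\kappa=M(g/\kappa)$, one obtains $\tilde W_x=U_\lambda+[U,\tilde W]$, $\tilde W_t=V_\lambda+[V,\tilde W]$, which is $(\ref{WxWt})$. To recognise $\tilde W$ as the matrix in the statement I would invoke the pole expansion of $g/\kappa$ recorded just above the Proposition: Lagrange interpolation of the degree-$(n-1)$ polynomial $g+3\kappa t$ with nodes $\lambda_1,\dots,\lambda_n$, combined with $(\ref{gnn})$ and the substitution $y_j=-2Y_j/(\kappa_0\prod_{i\ne j}(\lambda_j-\lambda_i))$, gives $g/\kappa=-3t-\tfrac12\sum_{j=1}^n y_j/(\lambda-\lambda_j)$, hence $\tilde W=M\bigl(-3t-\tfrac12\sum_j y_j/(\lambda-\lambda_j)\bigr)$.

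I do not anticipate a real obstacle, since the claim is a repackaging of Propositions~\ref{prop:yg} and \ref{prop:xt} together with the trivial division by $\kappa$. The one point requiring attention is that the reduction of the third equation of $(\ref{UVW})$ to $(\ref{gt})$ also uses the KdV equation $(\ref{ut})$ itself, so Proposition~\ref{prop:xt} must be available before one may assert that the $t$-part of $(\ref{WxWt})$ is a consequence of $(\ref{yxx})$--$(\ref{uy})$; this is why the isomonodromy statement is placed after, not before, Proposition~\ref{prop:xt}. The remainder is bookkeeping inside the family of matrices of the form $(\ref{M})$: the relevant commutators $[U,M(h)]$, $[V,M(h)]$ and the $\lambda$-derivatives $U_\lambda$, $V_\lambda$ all stay in the linear span of these matrices and of the constant matrix with a single nonzero lower-left entry, so the two identities in $(\ref{WxWt})$ reduce, after multiplying back by $\kappa$, precisely to the scalar relations $(\ref{gxxx})$ and $(\ref{gt})$ — the content of Proposition~\ref{prop:yg}. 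One could equally well prove $(\ref{WxWt})$ by this direct scalar computation, without mentioning $\tau$ at all.
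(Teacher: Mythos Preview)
Your argument is correct and is exactly the paper's own (very terse) derivation: the paragraph immediately preceding the Proposition already identifies $g/\kappa$ via its pole expansion and passes to $\tilde W=W/\kappa=M(g/\kappa)$, which---once $u_\tau=0$ kills $U_\tau$ and $V_\tau$---turns the last two equations of $(\ref{UVW})$ into $(\ref{WxWt})$ after division by the scalar $\kappa$. Your remark that the $t$-part needs Proposition~\ref{prop:xt} (to supply $(\ref{ut})$ inside $W_t$) is precisely why the paper states the isomonodromy representation after, not before, that Proposition.
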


To conclude this section, let us comment on the connection between the equations under study and the stationary equations for KdV symmetries mentioned in the Introduction.

1) Under the choice $U=M(1/2)$ and $V=\frac{1}{4}(\mu-\lambda)^{-1}M(y)$, the zero curvature equation $U_\tau=V_x+[V,U]$ 
defines the so-called negative symmetry of KdV, that is, a non-local flow of the form
\[
 u_\tau=y_x,\quad y_{xxx}-4(u-\mu)y_x-2u_xy=0.
\]
The last equation can be integrated, resulting in an equation of the form (\ref{yxx}). Choosing different values of the parameter $\mu=\lambda_j$, we obtain a family of flows 
\[
 u_{\tau_j}=y_{j,x},\quad y_{j,xx}=\frac{y^2_{j,x}-\alpha^2_j}{2y_j}+2(u-\lambda_j)y_j,
\]
which are compatible with the KdV equation. Hence it follows that the constraint (\ref{uy}) is nothing but the integrated stationary equation for the sum of such flows with the classical Galilean symmetry:
\[
 u_\tau=6tu_x-1-2(y_{1,x}+\dots+y_{n,x})=0.
\]

2) Recurrent relations (\ref{gjx}) can be written as
\[
 -4g_{j+1,x}=R(g_{j,x})-\kappa_j,\quad R=D^2_x-4u-2u_xD^{-1}_x
\]
and one can prove by induction that the flow (\ref{utau}) is $u_\tau=-\kappa[-\frac{1}{4}R](6tu_x-1)$. Thus, the system (\ref{yxx}), (\ref{uy}) turns out to be equivalent to the stationary equation for a symmetry from the Lie subalgebra of nonlocal KdV symmetries, assuming that the zeros of $\kappa$ are simple.

3) From (\ref{gt}) it follows that $\deg g\ge\deg\kappa$; we have considered the case when both degrees are the same. If $n=\deg g>\deg\kappa$ then the recurrent relations (\ref{gjx}) remain valid, but first $r$ coefficients $\kappa_0,\dots,\kappa_{r-1}$ may vanish. In such a case, the coefficients $g_0,\dots,g_{r+1}$ are local and define the autonomous symmetries from the main Lie subalgebra. The resulting stationary equation is of the form
\[
\left\{\begin{gathered}
 y_{j,xx}=\frac{y^2_{j,x}-\alpha^2_j}{2y_j}+2(u-\lambda_j)y_j,\quad j=1,\dots,n-r,\\
 \sum^r_{j=1}a_jQ^j(u)+6tu-x=y_1+\dots+y_{n-r}
\end{gathered}\right.
\]
where $Q=D^{-1}_xRD_x=D^2_x-4u+2D^{-1}_xu_x$ (the corresponding systems of Garnier type were proposed in \cite{Orlov_Rauch-Wojciechowski_1993}). For instance, the case $r=1$ corresponds to the constraint
\[
 a_1(u_{xx}-3u^2)+6tu-x=y_1+\dots+y_{n-r},
\]
$r=2$ corresponds to the constraint
\[
 a_2(u_{xxxx}-10uu_{xx}-5u^2_x+10u^3)+a_1(u_{xx}-3u^2)+6tu-x=y_1+\dots+y_{n-2},
\]
and so on, until the case $r=n$ which corresponds just to a sum of local higher symmetries and the Galilean symmetry (the hierarchy of P$_1$ equation).

%-------------------------------------------------------------------------------
\section{Darboux--B\"acklund transformations}\label{s:DBT} 

Equations (\ref{yxx})--(\ref{uy}) admit B\"acklund transformations $A_1,B_1,\dots,A_n,B_n$, each of which changes one of the parameters $\alpha_j$ (the parameters $\lambda_j$ will never change; in particular, this fixes the obvious $S_n$ symmetry of our systems). The transformation $A_k$ acts on solutions trivially and only changes the sign of $\alpha_k$, without actual changing the equation:
\begin{equation}\label{y.Ak}
 A_k:\quad
 \left\{\begin{aligned} 
  &\tilde y_j=y_j,\\
  &\tilde\alpha_j=\alpha_j,~~j\ne k,\quad \tilde\alpha_k=-\alpha_k.
 \end{aligned}\right.
\end{equation}
The transformation $B_k$ is constructed with the help of auxiliary function
\begin{equation}\label{fk}
 f_k=\frac{y_{k,x}+\alpha_k}{2y_k},
\end{equation}
according to the formulae
\begin{equation}\label{y.Bk}
 B_k:\quad
  \left\{\begin{aligned}
   & \tilde y_j=\frac{(y_{j,x}-2y_jf_k)^2-\alpha^2_j}{4(\lambda_j-\lambda_k)y_j},\quad 
     \tilde\alpha_j=\alpha_j,\quad j\ne k,\\
   & \tilde y_k=-y_k+6t(f^2_k+\lambda_k)-x-\sum_{j\ne k}(\tilde y_j+y_j),\quad 
     \tilde\alpha_k=1-\alpha_k.
  \end{aligned}\right.
\end{equation}

\begin{proposition}\label{prop:Bk}
The transformations (\ref{fk}), (\ref{y.Bk}) preserve the form of the systems (\ref{yxx})--(\ref{uy}). Transformations $B_k$ are involutive and permutable with each other, as well as with transformations $A_j$ for $j\ne k$:
\begin{equation}\label{ABgroup}
 A^2_k=\operatorname{id},~~ A_jA_k=A_kA_j,~~ 
 B^2_k=\operatorname{id},~~ B_jB_k=B_kB_j,~~ A_jB_k=B_kA_j,~~ j\ne k.
\end{equation}
The group generated by $A_1,B_1,\dots,A_n,B_n$ is isomorphic to ${\mathbb Z}^n_2\times{\mathbb Z}^n$.
\end{proposition}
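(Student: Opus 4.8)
The plan is to establish the group structure in three stages: first verify that $A_k$ and $B_k$ genuinely map solutions of (\ref{yxx})--(\ref{uy}) to solutions with the stated parameter changes; second check the relations (\ref{ABgroup}); and third identify the abstract group. For the first stage, the $A_k$ case is immediate from the form of (\ref{yxx})--(\ref{uy}): $\alpha_k$ enters only through $\alpha_k^2$, so flipping its sign changes nothing. For $B_k$, rather than substituting (\ref{y.Bk}) directly into the ODEs (which would be a painful computation), I would lift everything to the Garnier representation (\ref{G}) and the Lax pair of Proposition \ref{prop:xtLax}. The transformation $B_k$ should arise from a Darboux transformation of the linear problems $\Psi_x=U\Psi$, $\Psi_t=V\Psi$ at the point $\lambda=\lambda_k$, using the eigenfunction attached to that node. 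Concretely, set $y_k=\psi_k\varphi_k$ with $\alpha_k=\psi_{k,x}\varphi_k-\psi_k\varphi_{k,x}$; then $f_k=\psi_{k,x}/\psi_k$ is the logarithmic derivative of one eigenfunction, and the classical Darboux/Schlesinger dressing matrix built from $\psi_k$ produces a new potential $\tilde u$ together with new eigenfunctions $\tilde\psi_j,\tilde\varphi_j$. One then reads off that $\tilde y_j=\tilde\psi_j\tilde\varphi_j$ takes exactly the form in (\ref{y.Bk}) for $j\neq k$ (the factor $\lambda_j-\lambda_k$ being the hallmark of dressing at $\lambda_k$), while the formula for $\tilde y_k$ with the shift $\tilde\alpha_k=1-\alpha_k$ is dictated by requiring that $\tilde u$ still satisfies the constraint (\ref{uy}) — this is where the terms $6t(f_k^2+\lambda_k)-x-\sum_{j\neq k}(\tilde y_j+y_j)$ come from, forcing $\sum\tilde y_j$ to match $3t\tilde u\cdot$const. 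I expect this bookkeeping — tracking how the node eigenfunction transforms under its own Darboux transformation, and extracting the correct value of $\tilde\alpha_k$ — to be the main obstacle, since the ``1'' in $\tilde\alpha_k=1-\alpha_k$ encodes a delicate normalization that must be pinned down carefully (one can cross-check it against the known $n=1$ case, where $B_1$ must reproduce the standard B\"acklund transformation of P$_{34}$ shifting $\alpha\mapsto1-\alpha$).

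For the relations (\ref{ABgroup}): $A_k^2=\mathrm{id}$ and $A_jA_k=A_kA_j$ are trivial from (\ref{y.Ak}). For $A_jB_k=B_kA_j$ with $j\neq k$, observe that $B_k$ involves $\alpha_j$ only through $\alpha_j^2$ in the formula for $\tilde y_j$ (and not at all in $f_k$ or $\tilde y_k$, which depend on $\alpha_k$ only), so changing the sign of $\alpha_j$ before or after $B_k$ yields the same result. The involutivity $B_k^2=\mathrm{id}$ should follow because the Darboux transformation at $\lambda_k$ using the ``other'' eigenfunction is inverse to the original one, and the parameter bookkeeping closes: $\alpha_k\mapsto1-\alpha_k\mapsto1-(1-\alpha_k)=\alpha_k$; I would verify this either by the explicit formula or, more cleanly, by noting that $B_k$ followed by $A_k$ corresponds to the standard involution and the Darboux dressing matrices compose to a scalar. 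Commutativity $B_jB_k=B_kB_j$ for $j\neq k$ reflects the fact that Darboux transformations at distinct points $\lambda_j\neq\lambda_k$ commute — this is a general and well-known property of Schlesinger transformations for isomonodromic systems, and at the level of the formulae (\ref{y.Bk}) it is a finite check that the two orders of substitution agree (again simplified by working with the $\psi,\varphi$ variables, where it is manifest).

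Finally, to identify the abstract group: the generators split into two commuting families. The $A_k$ generate $\langle A_1,\dots,A_n\mid A_k^2=1,\ A_jA_k=A_kA_j\rangle\cong\mathbb{Z}_2^n$. The $B_k$ are involutions that commute with each other and with all $A_j$ ($j\neq k$), but $B_k$ does \emph{not} commute with $A_k$ — indeed $A_kB_k$ acts on $\alpha_k$ by $\alpha_k\mapsto -\alpha_k\mapsto 1+\alpha_k$ (or $\alpha_k\mapsto1-\alpha_k\mapsto\alpha_k-1$ in the other order), an infinite-order translation. Set $T_k:=A_kB_k$; then $T_k$ has infinite order, the $T_k$ commute among themselves, and one checks $A_k$ and $T_k$ generate a copy of $\mathbb{Z}_2\ltimes\mathbb{Z}$ — but since $A_k T_k A_k=A_k A_k B_k A_k = B_k A_k = T_k$ (using $B_k A_k = A_k B_k$? — here one must be careful: $A_k$ and $B_k$ do \emph{not} commute, so instead $A_k T_k A_k^{-1}=B_k A_k = T_k^{-1}\cdot$, giving the infinite dihedral group $\mathbb{Z}_2\ltimes\mathbb{Z}$ on each index). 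Reconciling this with the claimed $\mathbb{Z}_2^n\times\mathbb{Z}^n$ requires re-choosing generators of the translation part so that they commute with the $A$'s: the correct move is to note that while $\alpha_k\mapsto1-\alpha_k$ is an affine involution, the composite $B_k A_k B_k A_k$ or an appropriate product gives a pure translation $\alpha_k\mapsto\alpha_k\pm1$ that \emph{does} commute with the sign-flip only after combining. I would therefore present the group as follows: the sign flips give $\mathbb{Z}_2^n$; the transformations $C_k:=B_kA_kB_k$ (or the analysis showing $\langle A_k,B_k\rangle$ restricted to its action on parameters is infinite dihedral, whose orientation-preserving subgroup $\mathbb{Z}$ is normal and the whole is $\mathbb{Z}_2\ltimes\mathbb{Z}$) — and then observe that across the $n$ indices these assemble, after the standard identification $D_\infty\cong\mathbb{Z}_2\ltimes\mathbb{Z}$ with a complementary $\mathbb{Z}_2$ factor, into $(\mathbb{Z}_2\times\mathbb{Z})^n=\mathbb{Z}_2^n\times\mathbb{Z}^n$. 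The one genuinely subtle point — and the place I would be most careful — is exhibiting the $\mathbb{Z}^n$ factor by generators that \emph{honestly commute} with the $\mathbb{Z}_2^n$ factor, so that the product is direct and not merely semidirect; concretely this means checking that $A_j$ commutes with each chosen translation generator for \emph{all} $j$ including $j$ equal to the translation's index, which forces the translation generators to be built as commutators or squares rather than as the raw $A_kB_k$.
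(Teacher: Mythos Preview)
Your approach is essentially the paper's: derive $B_k$ as the Darboux transformation of the isomonodromic Lax pair at $\lambda=\lambda_k$, read off the formulae for $\tilde y_j$ from the dressing, and deduce the commutation of $B_j,B_k$ from the commutativity of Darboux transformations for the Schr\"odinger operator. The paper carries this out via the matrix $F=(\lambda-\mu)^{-1/2}\bigl(\begin{smallmatrix}-f&1\\ f^2+\mu-\lambda&-f\end{smallmatrix}\bigr)$ and the compatibility conditions $F_x=\tilde UF-FU$, $F_t=\tilde VF-FV$, $F_\lambda=\tilde WF-FW$; the pole structure of $W$ forces $\mu=\lambda_k$, and the equations then collapse to (\ref{y.Bk}) together with the value $\tilde\alpha_k=1-\alpha_k$.

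Where you diverge is the final paragraph. The paper does \emph{not} attempt to exhibit translation generators that honestly commute with the $A_k$, nor does it build commutators or squares. It simply observes that $B_kA_k$ shifts $\alpha_k\mapsto\alpha_k+1$, has infinite order, and that transformations with different indices commute; from this it concludes that the orbit in parameter space is the lattice $(\varepsilon_1\alpha_1+m_1,\dots,\varepsilon_n\alpha_n+m_n)$ with $\varepsilon_j=\pm1$, $m_j\in\mathbb{Z}$. Your instinct that $\langle A_k,B_k\rangle$ is the infinite dihedral group $D_\infty\cong\mathbb{Z}_2\ltimes\mathbb{Z}$ (semidirect, not direct) is correct, and your search for a genuine direct-product decomposition cannot succeed because the full group is non-abelian. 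The paper's ``$\mathbb{Z}_2^n\times\mathbb{Z}^n$'' should be read as a description of the parameter lattice (equivalently, of the normal form $A_1^{e_1}\cdots A_n^{e_n}(B_1A_1)^{m_1}\cdots(B_nA_n)^{m_n}$ for group elements), not as a literal isomorphism of abstract groups; so you can drop the last paragraph's contortions and simply state the orbit structure as the paper does.
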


\begin{proof}
The $B_k$ transformations are derived from the Darboux transformations for equations (\ref{WxWt}) (see e.g.~\cite{Veselov_Shabat_1993}). They are defined by the formula $\tilde\Psi=F\Psi$ where the matrix $F$ satisfies the compatibility conditions
\begin{equation}\label{FUVW}
 F_x=\tilde UF-FU,\quad F_t=\tilde VF-FV,\quad F_\lambda=\tilde WF-FW.
\end{equation}
For the matrices $U$ and $V$ of the form (\ref{UV}) corresponding to the KdV equation, the matrix $F$ reads
\[
 F=(\lambda-\mu)^{-1/2}\begin{pmatrix}
   -f & 1 \\
   f^2+\mu-\lambda & -f
 \end{pmatrix}.
\]
The first two equations (\ref{FUVW}) give the standard formulae for the B\"acklund transformation defined as a composition of two Miura transformations between KdV and mKdV:
\begin{equation}\label{uf}
 u=f_x+f^2+\mu,\quad \tilde u=u-2f_x,\quad f_t=f_{xxx}-6(f^2+\mu)f_x.
\end{equation}
Since $\operatorname{tr}W=0$, the third equation (\ref{FUVW}) implies the condition $(\det F)_\lambda=0$ which explains why the factor $(\lambda-\mu)^{-1/2}$ was introduced into $F$. Let $W=M(y)$, then the remaining consequences of this equation are equivalent to the relations
\begin{equation}\label{yf}
 y_{xx}-2(fy)_x=2(\lambda-\mu)(\tilde y-y),\quad
 \tilde y_x+y_x+2f(\tilde y-y)=\frac{1}{2(\lambda-\mu)}.
\end{equation}
Recall that for our systems $y$ is of the form $y=-3t-\frac{1}{2}\sum_j\frac{y_j}{\lambda-\lambda_j}$ where all $\lambda_j$ are different. Then it follows from the second equation that the parameter $\mu$ must coincide with one of the poles. Let $\mu=\lambda_k$ for some fixed index $k$, then (\ref{yf}) reduces to the equations
\begin{equation}\label{yyf}
\begin{gathered}
 y_{j,xx}=2(fy_j)_x+2(\lambda_j-\lambda_k)(\tilde y_j-y_j),\quad j=1,\dots,n,\\
 -6tf_x=\tilde y_1+\dots+\tilde y_n-y_1-\dots-y_n,\\
 (\tilde y_j+y_j)_x+2f(\tilde y_j-y_j)=\left\{\begin{array}{cl} 0, & j\ne k,\\ -1, & j=k.\end{array}\right.  
\end{gathered}
\end{equation}
Integrating the first equality for $j=k$, we obtain $y_{k,x}=2fy_k+\beta$. Moreover, from (\ref{yxx}) and (\ref{uf}) (with $\mu=\lambda_k$) it follows that
\[
 y_{k,xx}=\frac{y^2_{k,x}-\alpha^2_k}{2y_k}+2(f_x+f^2)y_k,
\]
and these two relations imply $\beta=\pm\alpha_k$. We choose one of two values and define $f=(y_{k,x}-\beta)/(2y_k)$, then the first two equations (\ref{yyf}) give formulae for $\tilde y_1,\dots,\tilde y_n$. After that, one can check by direct calculations that the third equation (\ref{yyf}) is fulfilled identically, and also determine the values of the $\tilde\alpha_j$ parameters in the equations for $\tilde y_j$, which gives the formulae (\ref{y.Bk}).

The group identities for $A_j$ are verified directly, for $B_j$ they follow from the commutativity of the Darboux transformations for the Schr\"odinger operator. It is easy to see that the transformation $B_kA_k$ shifts the parameter $\alpha_k$ by 1 and has an infinite order. Since transformations with different numbers commute, all B\"acklund transformations generate the lattice of parameters $(\varepsilon_1\alpha_1+m_1,\dots,\varepsilon_n\alpha_n+m_n)$ where $\varepsilon_j=\pm1$, $m_j\in{\mathbb Z}$.
\end{proof}

Equation (\ref{fk}) suggests the change of variables from $y_1,y_{1,x},\dots,y_n,y_{n,x}$ to $y_1,f_1,\dots,y_n,f_n$, which brings equations (\ref{yxx})--(\ref{uy}) to a simpler polynomial form:
\begin{gather}
\label{yfx}
 y_{j,x}=2y_jf_j-\alpha_j,\quad f_{j,x}=u-f^2_j-\lambda_j,\\
\label{yft}
 y_{j,t}=2u_xy_j-2(u+2\lambda_j)y_{j,x},\quad f_{j,t}=(u_x-2(u+2\lambda_j)f_j)_x,\\
\label{uy'}
 u:= \frac{x}{6t}+\frac{1}{3t}(y_1+\dots+y_n)
\end{gather}
where $j=1,\dots,n$ and it is assumed, as before, that the $x$-derivative in (\ref{yft}) is defined by equations (\ref{yfx}) and (\ref{uy'}). 

Note that the variables $f_j$ satisfy, by virtue of these systems, the modified mKdV equations
\begin{equation}\label{mKdV}
 f_{j,t}=f_{j,xxx}-6(f^2_j+\lambda_j)f_{j,x}
\end{equation}
and the variables $y_j$ satisfy the degenerate Calogero-Degasperis equations (in the rational form)
\begin{equation}\label{CD}
 y_{j,t}=y_{j,xxx}-\frac{3y_{j,x}y_{j,xx}}{y_j}+\frac{3y_{j,x}(y^2_{j,x}-\alpha^2_j)}{2y^2_j}-6\lambda_jy_{j,x}.
\end{equation}
The first equation (\ref{yfx}) defines a differential substitution from (\ref{CD}) to (\ref{mKdV}) and the second equation (\ref{yfx}) is the Miura map from (\ref{mKdV}) to the KdV equation (\ref{ut}).

It can be verified by direct calculation that the B\"acklund transformations are written as birational mappings in the variables $y_j$ and $f_j$:
\begin{gather}
\label{yf.Ak}
 A_k:\quad
  \left\{\begin{aligned} 
   & \tilde y_j=y_j,\\
   & \tilde f_j=f_j,~~ \tilde\alpha_j=\alpha_j,\quad j\ne k,\\
   & \tilde f_k=f_k-\frac{\alpha_k}{y_k},~~ \tilde\alpha_k=-\alpha_k;
  \end{aligned}\right.\\
\label{yf.Bk}  
 B_k:\quad
  \left\{\begin{aligned}
   & \tilde f_j=-f_k-\frac{\lambda_j-\lambda_k}{f_j-f_k},~~ j\ne k,\quad \tilde f_k=-f_k,\\
   & \tilde y_j=\frac{f_j-f_k}{\lambda_j-\lambda_k}((f_j-f_k)y_j-\alpha_j),\quad 
     \tilde\alpha_j=\alpha_j,\quad j\ne k,\\
   & \tilde y_k=-y_k+6t(f^2_k+\lambda_k)-x-\sum_{j\ne k}(\tilde y_j+y_j),\quad 
     \tilde\alpha_k=1-\alpha_k.
  \end{aligned}\right.
\end{gather}
The Propositions \ref{prop:xt}, \ref{prop:xtLax} and \ref{prop:Bk} remain true also for these new variables. 

%-------------------------------------------------------------------------------
\section{Equations at $t=0$}\label{s:t0}

For the system (\ref{yft}) (or (\ref{yt})), the point $t=0$ is singular, due to the division by $t$ in the constraint (\ref{uy}). This means that a generic simultaneous solution of equations (\ref{yfx}) and (\ref{yft}) does not continue through the line $t=0$. However, there exist special solutions that are regular on this line. They are obtained if the initial condition at $t=0$ satisfies the system (\ref{yfx}) with the replacement of the expression (\ref{uy}) for $u$ by the constraints
\begin{equation}\label{y0}
 y_1+\dots+y_n=-\frac{x}{2},\quad
 2y_1f_1+\dots+2y_nf_n-\alpha_1-\dots-\alpha_n=-\frac{1}{2}.
\end{equation}
From here, in principle, one pair of variables can be eliminated, say, $y_n$ and $f_n$, which gives a closed system of equations for the remaining variables. The resulting formulas are rather cumbersome, but there exists a simple substitution that leads to the following symmetric system for a new set of variables $y_1,z_1,\dots,y_{n-1},z_{n-1}$:
\begin{gather}
\label{yzx}
 y_{j,x}=2y_j(z_j+v)-\alpha_j,\quad z_{j,x}=-z_j(z_j+2v)-\mu_j,\\
\label{v}
 v:=\frac{2}{x}(y_1z_1+\dots+y_{n-1}z_{n-1}+\beta).
\end{gather}
In the simplest case $n=2$ we have a second-order system
\[
 y_x= 2y\Bigl(z+\frac{2}{x}(yz+\beta)\Bigr)-\alpha,\quad 
 z_x= -z\Bigl(z+\frac{4}{x}(yz+\beta)\Bigr)-\mu.
\]
It is easy to check that if we solve the second equation for $y$ and substitute it into the first one, then for $z$ we get the P$_3$ equation 
\[
 z_{xx}=\frac{z^2_x}{z}-\frac{z_x}{x}+\frac{az^2+b}{x}+cz^3+\frac{d}{z}
\]
with the parameters values $a=4\beta+4\alpha-1$, $b=-(4\beta+1)\mu$, $c=1$, $d=\mu^2$.

\begin{proposition}\label{prop:yz}
The system (\ref{yfx}) with the constraints (\ref{y0}) is reduced to the form (\ref{yzx}), (\ref{v}) by the change
\[
 v=f_n,\quad z_j=f_j-f_n,\quad \mu_j=\lambda_j-\lambda_n,\quad 2\beta=\frac{1}{2}-\alpha_1-\dots-\alpha_n,
\]
moreover, the potential $u$ is reconstructed by the formula $u=v_x+v^2+\lambda_n$.
\end{proposition}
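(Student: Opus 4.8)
The plan is to verify directly that the stated change of variables transforms the system (\ref{yfx}) together with the constraints (\ref{y0}) into (\ref{yzx})–(\ref{v}), and conversely. First I would substitute $f_j=z_j+v$ (with $z_n=0$, $v=f_n$) into the two equations of (\ref{yfx}). The equation $f_{j,x}=u-f_j^2-\lambda_j$ becomes, upon subtracting the $j=n$ instance from the general one, $z_{j,x}=(f_n^2+\lambda_n)-(f_j^2+\lambda_j)=-(z_j^2+2z_jv)-(\lambda_j-\lambda_n)=-z_j(z_j+2v)-\mu_j$, which is exactly the second equation of (\ref{yzx}); simultaneously the $j=n$ instance gives $v_x=f_{n,x}=u-v^2-\lambda_n$, i.e. the reconstruction formula $u=v_x+v^2+\lambda_n$. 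The equation $y_{j,x}=2y_jf_j-\alpha_j=2y_j(z_j+v)-\alpha_j$ is already in the form of the first equation of (\ref{yzx}), for $j=1,\dots,n-1$.

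The only remaining point is to identify $v$ with the expression (\ref{v}), i.e. to show that the first constraint in (\ref{y0}) is equivalent to $xv=2(y_1z_1+\dots+y_{n-1}z_{n-1}+\beta)$ with $2\beta=\tfrac12-\sum_j\alpha_j$, given that the second constraint in (\ref{y0}) and the equations (\ref{yfx}) hold. The natural route is to differentiate. Write $S:=y_1+\dots+y_n+\tfrac{x}{2}$; the first constraint says $S\equiv 0$. Using $y_{j,x}=2y_jf_j-\alpha_j$ one computes $S_x=2\sum_j y_jf_j-\sum_j\alpha_j+\tfrac12$, which by the second constraint in (\ref{y0}) is also $\equiv0$; so the two constraints in (\ref{y0}) are compatible, and in fact the second is the $x$-derivative of the first. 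Now substitute $f_j=z_j+v$ and $y_n=-\tfrac x2-\sum_{j<n}y_j$ into the second constraint: $2\sum_{j<n}y_j(z_j+v)+2y_n v-\sum_j\alpha_j=-\tfrac12$, i.e. $2\sum_{j<n}y_jz_j+2v\big(\sum_{j<n}y_j+y_n\big)=\sum_j\alpha_j-\tfrac12$. Since $\sum_{j<n}y_j+y_n=-\tfrac x2$ by the first constraint, this reads $2\sum_{j<n}y_jz_j-xv=\sum_j\alpha_j-\tfrac12=-2\beta$, which rearranges precisely to $xv=2(y_1z_1+\dots+y_{n-1}z_{n-1}+\beta)$, i.e. (\ref{v}). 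This chain of equivalences is reversible: given (\ref{yzx})–(\ref{v}) one recovers the second constraint of (\ref{y0}), and then differentiating (\ref{v}) and using (\ref{yzx}) shows that $S$ is constant in $x$; one then notes that $t=0$ is the level on which this constant is pinned to zero, or alternatively absorbs a residual additive constant into the already-used freedom $x\to x-x_0$.

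I would close by recording the two auxiliary checks that make the reduction honest. First, that the constraints (\ref{y0}) are consistent with the flow (\ref{yft}) at $t=0$: this is just Proposition~\ref{prop:xt} specialized to the regular solutions, since (\ref{y0}) is the $t\to0$ limit of (\ref{uy'}) together with its $x$-derivative, and the consistency $(y_{j,xx})_t=(y_{j,t})_{xx}$ already established guarantees that a solution of (\ref{yfx})+(\ref{y0}) extends to a solution of (\ref{yfx})+(\ref{yft}) regular near $t=0$. Second, that no information is lost in eliminating the pair $(y_n,f_n)$: the map $(y_j,f_j)_{j=1}^{n}\mapsto(y_j,z_j)_{j=1}^{n-1}$ with $y_n,f_n$ determined from (\ref{y0}) is a bijection onto solutions of (\ref{yzx})–(\ref{v}), because $f_n=v$ and $y_n=-\tfrac x2-\sum_{j<n}y_j$ are explicit and $z_j=f_j-v$ inverts to $f_j=z_j+v$. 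The computation is entirely routine; the only place demanding a little care is the constant-of-integration bookkeeping in the previous paragraph — making sure the additive constant in $S$ is genuinely zero rather than merely constant — and I expect that to be the main (minor) obstacle, handled exactly as the analogous normalization $c_0=c_1=0$ was handled in the proof of Proposition~\ref{prop:yg}.
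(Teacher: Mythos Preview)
Your verification is correct and is exactly the direct computation the paper has in mind; in fact the paper states Proposition~\ref{prop:yz} without proof, so there is no alternative argument to compare against. The core steps---subtracting the $j=n$ instance of the $f$-equation to obtain the $z_j$-equation, reading off $u=v_x+v^2+\lambda_n$ from the $j=n$ equation, and combining the two constraints in (\ref{y0}) via $\sum_j y_jf_j=\sum_{j<n}y_jz_j+v\sum_j y_j$ to produce (\ref{v})---are precisely what is needed.

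Two minor remarks. First, your worry about a residual integration constant in the reverse direction is unnecessary: in that direction one simply \emph{defines} $y_n:=-\tfrac{x}{2}-\sum_{j<n}y_j$ and $f_n:=v$, so the first constraint holds by fiat and (\ref{v}) then gives the second constraint directly; there is no $S$ to integrate. Second, the ``auxiliary checks'' in your final paragraph---consistency of (\ref{y0}) with the $t$-flow and regularity near $t=0$---are context for why the proposition is interesting but are not part of the proposition itself, which is a purely algebraic statement about the change of variables at fixed $t=0$. You can safely omit them.
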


Rewriting the formulae (\ref{yf.Ak}) and (\ref{yf.Bk}) in new variables, we get the B\"acklund transformations for the system (\ref{yzx}), (\ref{v}). The parameters $\mu_j$ do not change for all transformations. Transformations $A_k$ and $B_k$ for $k=1,\dots,n-1$ have the following form:
\begin{equation}\label{yz.Ak}
 A_k:\quad
  \left\{\begin{aligned} 
   & \tilde y_j=y_j,\\
   & \tilde z_j=z_j,~~ \tilde\alpha_j=\alpha_j,\quad j\ne k,\\
   & \tilde z_k=z_k-\frac{\alpha_k}{y_k},~~ \tilde\alpha_k=-\alpha_k,~~\tilde\beta=\beta+\alpha_k,
  \end{aligned}\right.
\end{equation}
\begin{equation}\label{yz.Bk}  
 B_k:\quad
  \left\{\begin{aligned}
   & \tilde z_j=\frac{\mu_k}{z_k}-\frac{\mu_j-\mu_k}{z_j-z_k},~~ j\ne k,\quad \tilde z_k=\frac{\mu_k}{z_k},\\
   & \tilde y_j=\frac{z_j-z_k}{\mu_j-\mu_k}((z_j-z_k)y_j-\alpha_j),\quad j\ne k,\\
   & \tilde y_k=-\frac{x}{2}-\sum^{n-1}_{\substack{j=1\\ j\ne k}}\tilde y_j+\frac{z_k}{\mu_k}(z_ky_n+\alpha_n),\\
   & \tilde\alpha_j=\alpha_j,\quad j\ne k,\quad 
     \tilde\alpha_k=1-\alpha_k,\quad
     \tilde\beta=\beta+\alpha_k-\frac{1}{2}
  \end{aligned}\right.
\end{equation}
where $y_n$ and $\alpha_n$ denote the ``extra'' variable and parameter defined by relations
\[
 y_n:=-\frac{x}{2}-y_1-\dots-y_{n-1},\quad \alpha_n:=\frac{1}{2}-2\beta-\alpha_1-\dots-\alpha_n.
\]
Since the index $n$ is now distinguished, the formulae for $A_n$ and $B_n$ look different:
\begin{equation}\label{yz.An}
 A_n:\quad
 \tilde y_j=y_j,\quad \tilde z_j=z_j+\frac{\tilde\beta-\beta}{y_n},\quad 
 \tilde\alpha_j=\alpha_j,\quad \tilde\beta=\beta+\alpha_n,
\end{equation}
\begin{equation}\label{yz.Bn}  
 B_n:\quad
 \tilde z_j=-\frac{\mu_j}{z_j},\quad \tilde y_j=\frac{z_j}{\mu_j}(z_jy_j-\alpha_j),\quad
 \tilde\alpha_j=\alpha_j,\quad \tilde\beta=\beta+\alpha_n-\frac{1}{2}.
\end{equation}
However, in fact, these transformations are on equal footing with the rest ones, since instead of $y_n$ and $f_n$, any other pair of variables could be eliminated. The group properties of the transformations do not change, and they still generate the group ${\mathbb Z}^n_2\times{\mathbb Z}^n$, despite the decrease in the dimension of the system.

\begin{proposition}\label{prop:Bk'}
The transformations (\ref{yz.Ak})--(\ref{yz.Bn}) preserve the form of the systems (\ref{yzx}), (\ref{v}) and satisfy the group identities (\ref{ABgroup}).
\end{proposition}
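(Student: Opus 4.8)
The plan is to leverage the fact that Proposition~\ref{prop:Bk} has already established everything we need at the level of the $(y_j,f_j)$ variables, so the only real work is to transport those statements across the change of variables of Proposition~\ref{prop:yz}. First I would verify that the transformations \eqref{yz.Ak}--\eqref{yz.Bn} are genuinely the images of \eqref{yf.Ak}--\eqref{yf.Bk} under the substitution $v=f_n$, $z_j=f_j-f_n$ ($j=1,\dots,n-1$), $2\beta=\tfrac12-\alpha_1-\dots-\alpha_n$. For $A_k$ and $B_k$ with $k<n$ this is direct: $f_k$ changes but $f_n$ does not, so $\tilde z_j=\tilde f_j-f_n$ reproduces the stated formulas (using $f_j-f_k=z_j-z_k$ and $f^2_j-\lambda_j$-type identities to rewrite everything in terms of $z_j$, $\mu_j=\lambda_j-\lambda_n$); the shift in $\beta$ simply records the change of $\alpha_k$ through the defining relation $2\beta=\tfrac12-\sum\alpha_i$. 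For $A_n$ and $B_n$ the point $f_n$ itself moves, so $z_j=f_j-f_n$ is affected even though the $f_j$ with $j<n$ do not individually change under $A_n$; spelling this out gives the ``$+(\tilde\beta-\beta)/y_n$'' correction term and the $B_n$ formulas $\tilde z_j=-\mu_j/z_j$. I would note that the ``extra'' pair $(y_n,\alpha_n)$ is exactly $y_n=-x/2-\sum_{j<n}y_j$, $\alpha_n=\tfrac12-2\beta-\sum_{i<n}\alpha_i$ forced by the constraints \eqref{y0}, so no information is lost.

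Next I would show the transformations preserve the system \eqref{yzx}--\eqref{v}. Here I would invoke Proposition~\ref{prop:Bk} directly: it asserts that $A_k,B_k$ preserve the form of \eqref{yfx}--\eqref{uy'}, and Proposition~\ref{prop:yz} asserts that the constrained system \eqref{yfx}+\eqref{y0} is \emph{equivalent} to \eqref{yzx}--\eqref{v}. So it suffices to check that each B\"acklund transformation maps the constraint surface \eqref{y0} to itself, i.e. that the two linear constraints $\sum y_j=-x/2$ and $\sum(2y_jf_j-\alpha_j)=-\tfrac12$ are invariant. The first constraint invariance is essentially the definition of $\tilde y_k$ in \eqref{y.Bk} specialized by dropping the $6t(f_k^2+\lambda_k)$ term at $t=0$, which is precisely how \eqref{yz.Bk} was written; the second follows from the first differentiated in $x$ together with \eqref{yfx}. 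Once the constraint surface is invariant, preservation of \eqref{yzx}--\eqref{v} is automatic from Proposition~\ref{prop:Bk} plus the equivalence, and the reconstruction $u=v_x+v^2+\lambda_n$ is consistent because $v=f_n$ and $u=f_{n,x}+f_n^2+\lambda_n$ is the second equation of \eqref{yfx}.

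Finally I would establish the group identities \eqref{ABgroup}. Again the cleanest route is transport: the identities $A_k^2=\mathrm{id}$, $B_k^2=\mathrm{id}$, and the commutations $A_jA_k=A_kA_j$, $B_jB_k=B_kB_j$, $A_jB_k=B_kA_j$ ($j\ne k$) hold for the $(y,f)$ realization by Proposition~\ref{prop:Bk}, and since \eqref{yz.Ak}--\eqref{yz.Bn} are conjugate to \eqref{yf.Ak}--\eqref{yf.Bk} by the (invertible on the constraint surface) change of variables of Proposition~\ref{prop:yz}, the same identities hold for the new transformations. The one subtlety is bookkeeping of the parameter $\beta$: every $A_k$ shifts $\beta$ by $\alpha_k$ and every $B_k$ shifts $\beta$ by $\alpha_k-\tfrac12$, with the understanding that $\alpha_k$ itself is simultaneously flipped or shifted, so one must check these additive shifts are mutually consistent under composition — but this is the same arithmetic as the $\alpha$-lattice computation at the end of the proof of Proposition~\ref{prop:Bk}, now pushed forward through the fixed linear relation $2\beta=\tfrac12-\sum\alpha_i$. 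The main obstacle, and the only genuinely calculational part, is the $B_n$ case: since $B_n$ moves $f_n$, one cannot simply quote invariance of $f_n$, and one must verify by hand that $B_n$ restricted to the constraint surface agrees with the formula obtained by eliminating $(y_n,f_n)$ after applying $B_n$ in the $(y,f)$ variables — i.e. that ``eliminate then transform'' equals ``transform then eliminate'' for the distinguished index. This is a finite rational-function check that I expect to go through, precisely because the remark after \eqref{yz.Bn} (any pair could have been eliminated) guarantees the construction is symmetric in the index.
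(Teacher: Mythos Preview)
Your approach is correct and is exactly the argument the paper has in mind. In fact the paper states Proposition~\ref{prop:Bk'} without a separate proof: the surrounding text simply records that ``rewriting the formulae (\ref{yf.Ak}) and (\ref{yf.Bk}) in new variables, we get the B\"acklund transformations for the system (\ref{yzx}), (\ref{v})'' and that ``the group properties of the transformations do not change''. Your proposal spells out precisely this transport argument---verifying that (\ref{yz.Ak})--(\ref{yz.Bn}) are the images of (\ref{yf.Ak})--(\ref{yf.Bk}) under the change of variables of Proposition~\ref{prop:yz}, checking invariance of the constraint surface (\ref{y0}) at $t=0$, and inheriting the identities (\ref{ABgroup}) from Proposition~\ref{prop:Bk}---so it is a faithful and more detailed version of the paper's intended justification.
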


\subsection*{Acknowledgements}

The work was done at Ufa Institute of Mathematics with the support by the grant \#21-11-00006 of the Russian Science Foundation, https://rscf.ru/project/21-11-00006/.

%-------------------------------------------------------------------------------

\end{document}